\documentclass[11pt]{article}
\usepackage{amsmath,amssymb, amsthm,amsfonts}
\usepackage{microtype}
\usepackage{color}
\usepackage{mathptmx}
\usepackage{fullpage}
\usepackage{mathrsfs}  
\usepackage{natbib}
\usepackage{dsfont}

\newtheorem{theorem}{Theorem}

\newtheorem{proposition}[theorem]{Proposition}

\newcommand{\cclass}[1]{\textsf{#1}}


\newcommand{\set}[1]{\left\{ #1 \right\}}

\newcommand{\e}{\mathrm{e}}


\newcommand{\aut}{\mathrm{Aut}} 



\newcommand{\GL}{\mathsf{GL}} 


\newcommand{\FF}{\mathbb{F}}


\newcommand{\RM}{\mathsf{RM}} 

\DeclareMathAlphabet{\varmathbb}{U}{bbold}{m}{n}

\newcommand{\one}{\mathds{1}}

\bibliographystyle{plainnat}

\title{Quantum Fourier sampling, Code Equivalence, and the quantum security of the McEliece and Sidelnikov cryptosystems}
\author{Hang Dinh\\ 
Indiana University South Bend\\
\textsf{hdinh@cs.iusb.edu} 
\and
Cristopher Moore\\
University of New Mexico\\
and Santa Fe Institute\\
\textsf{moore@cs.unm.edu} 
\and  
Alexander Russell\\
University of Connecticut\\
\textsf{acr@cse.uconn.edu}
}


\begin{document}

\maketitle              

\begin{abstract}
  The Code Equivalence problem is that of determining whether two
  given linear codes are equivalent to each other up to a permutation of the coordinates. This problem has a direct reduction to a
  nonabelian hidden subgroup problem (HSP), suggesting a possible quantum algorithm 
  analogous to Shor's algorithms for factoring or discrete log.  However, in~\citet{Ref_Dinh11McEliece} we showed that in many cases of interest---including Goppa codes---solving this case of the HSP requires rich, entangled measurements.
  Thus, solving these cases of Code Equivalence via Fourier sampling appears to be out of reach of current families of quantum algorithms.

  Code equivalence is directly related to the security of McEliece-type cryptosystems in the case where the private code is known to the adversary.  However, for many codes the support splitting algorithm of Sendrier provides a classical attack in this case.  We revisit the claims of our previous article in the light of these classical attacks, and discuss the particular case of the Sidelnikov cryptosystem, which is based on Reed-Muller codes.  \end{abstract}

\pagenumbering{arabic}
\section{Introduction}

Code Equivalence is the problem of deciding whether two matrices over a finite field generate equivalent linear codes, i.e., codes that are equal up to a fixed permutation on the codeword coordinates. \citet{Ref_Petrank97code} showed that Code Equivalence is unlikely to be \cclass{NP}-complete, but is at least as hard as Graph Isomorphism. We consider a search version of Code Equivalence: given generator matrices $M$ and $M'$ for two equivalent linear $q$-ary codes, find a pair of matrices $(S, P)$, where $S$ is an invertible square matrix over $\FF_q$ and $P$ is a permutation matrix, such that $M'=SMP$.   

Code Equivalence has an immediate presentation as a \emph{hidden subgroup problem}, suggesting that one might be able to develop an efficient quantum algorithm for it via the quantum Fourier transform. In our previous article~\citep{Ref_Dinh11McEliece}, however, we showed that under natural structural assumptions on the code, the resulting instance of the hidden subgroup problem requires entangled measurements of the coset state and, hence, appears to be beyond the reach of current methods.

We argued in~\citep{Ref_Dinh11McEliece} that our results strengthen the case for the McEliece cryptosystem as a candidate for 
\emph{post-quantum cryptography}---a cryptosystem that can be implemented with classical computers, 
but which will remain secure even if and when quantum computers are built.  In this note, we revisit this statement in light of Sendrier's support splitting algorithm (SSA), which finds the hidden permutation $P$ for many families of codes.  In particular, the SSA implies that the McEliece cryptosystem based on Goppa codes is classically insecure when the private code is known.  We also observe that our results apply to Reed-Muller codes and thus to a natural quantum attack on the Sidelnikov cryptosystem.

\section{Ramifications for McEliece-type cryptosystems}

The private key of a McEliece cryptosystem is a triple $(S, M, P)$, where $S$ is an invertible matrix over $\FF_q$, $P$ is a permutation matrix, and $M$ is the generator matrix for a $q$-ary error-correcting code that permits efficient decoding. The public key is the generator matrix $M' = SMP$. If both $M$ and $M'$ are known to an adversary, the problem of recovering $S$ and $P$ (the remainder of the secret key) is precisely the version of Code Equivalence described above.  If $M$ and $M'$ have full rank, then given $P$ we can find $S$ by linear algebra.  Thus the potentially hard part of the problem is finding the hidden permutation $P$.

We call an adversary apprised of both $M$ and $M'$ a \emph{known-code} adversary.  In our recent article~\citep{Ref_Dinh11McEliece}, we noted that our results on Goppa codes imply that the natural quantum attack available to a known-code adversary yields hard cases of the hidden subgroup problem, and asserted that this should bolster our confidence in the post-quantum security of the McEliece cryptosystem.  

However, the classical \emph{support splitting algorithm} (SSA) of~\citet{Ref_Sendrier00finding} can efficiently solve Code Equivalence for Goppa codes, and many other families of codes as well.  
(In addition, Goppa codes of high rate can be distinguished from random codes, opening them to additional attacks~\citep{FOPTDist_YACC10}.)
Thus for McEliece based on Goppa codes, the known-code adversary is too powerful: it can break the cryptosystem classically. Therefore, the hardness of the corresponding instances of the HSP has little bearing on the post-quantum security of the McEliece cryptosystem, at least for this family of codes.

The situation is similar in many ways to the status of Graph Isomorphism.  There is a natural reduction from Graph Isomorphism to the HSP on the symmetric group, but a long series of results (e.g., \citet{Ref_HallgrenJACM,sieve}) have shown that the resulting instances of the HSP require highly-entangled measurements, and that known families of such measurements cannot succeed.  Thus the miracle of Shor's algorithms for factoring and discrete log, where we can solve these problems simply by looking at the symmetries of a certain function, does not seem to apply to Graph Isomorphism.  Any efficient quantum algorithm for it would have to involve significantly new ideas.

On the other hand, many cases of Graph Isomorphism are easy classically, including graphs with bounded eigenvalue multiplicity~\citep{bounded-mult} and constant degree~\citep{luks}.  Many of these classical algorithms work by finding a \emph{canonical labeling} of the graph~\citep{babai-canonical,babai-luks}, giving each vertex a unique label based on local quantities.  These labeling schemes use the details of the graph, and not just its symmetries---precisely what the reduction to the HSP leaves out.  Analogously, the support splitting algorithm labels each coordinate of the code by the weight enumerator of the hull of the code punctured at that coordinate.  For most codes, including Goppa codes, this creates a labeling that is unique or nearly unique, allowing us to determine the permutation $P$.  

There are families of instances of Graph Isomorphism that defeat known methods, due to the fact that no local or spectral property appears to distinguish the vertices from each other.  In particular, no polynomial-time algorithm is known for isomorphism of strongly regular graphs.  (On the other hand, these graphs are highly structured, yielding canonical-labeling algorithms that, while still exponential, are faster than those known for general graphs~\citep{spielman-stronglyreg}.)  In the same vein, we might hope that there are families of codes where the coordinates are hard to distinguish from each other using linear-algebraic properties.  In that case, the corresponding McEliece cryptosystem might be hard classically even for known-code adversaries, and the reduction to the HSP would be relevant to their post-quantum security.

Along these lines, \citet{Ref_Sidelnikov94public} proposed a variant of the McEliece cryptosystem using binary Reed-Muller codes.  Since there is a single Reed-Muller code of given rate and block length, the code $M$ is known to the adversary and the security of the system is directly related to the Code Equivalence problem. Additionally, since Reed-Muller codes are self-dual, they coincide with their hulls so that the weight enumerators used by the SSA are exponentially large, making them resistant to that classical attack.  There are also extremely efficient algorithms for error correction in Reed-Muller codes, suggesting that large key sizes are computationally feasible.

We observe below that the results of~\cite{Ref_Dinh11McEliece} apply directly to Reed-Muller codes, and thus frustrate the natural quantum Fourier sampling approach to the corresponding instances of Code Equivalence. As virtually all known exponential speed-ups of quantum algorithms for algebraic problems derive from Fourier sampling, this suggests that new ideas would be necessary to exploit quantum computing for breaking the Sidelnikov system.

On the other hand, a classical algorithm of~\citet{Ref_Minder07Cryptanalysis} solves the Code Equivalence problem for binary Reed-Muller codes in quasipolynomial time, at least in the low-rate setting where Reed-Muller codes have the best performance, yielding a direct attack on the Sidelnikov system.

\section{Quantum Fourier sampling for Code Equivalence and Reed-Muller codes}

We say a linear code $M$ is \emph{HSP-hard} if strong quantum Fourier sampling, or more generally any measurement of a coset state, reveals negligible information about the permutation between $M$ and any code equivalent to $M$.  (See e.g.~\citet{mrs-symmetric} for definitions of the coset state and strong Fourier sampling.)  If $M$ is a $q$-ary $[n,k]$-code, its automorphism group $\aut(M)$ is the set of permutations $P \in S_n$ such that $M=SMP$ for some invertible $k \times k$ matrix $S$ over $\FF_q$.  Recall that the \emph{support} of a permutation $\pi\in S_n$ is the number of points that are not fixed by $\pi$, and the \emph{minimal degree} of a subgroup $H \subseteq S_n$ is the smallest support of any non-identity $\pi \in H$.  The following theorem is immediate from Corollary 1 in~\citet{Ref_Dinh11McEliece}:
\begin{theorem}\label{Thm:Quantum-McEliece}
Let $M$ be a $q$-ary $[n,k]$-linear code such that $q^{k^2}\leq n^{0.2 \,n}$. If $|\aut(M)| \le \e^{o(n)}$ and the minimal degree of $\aut(M)$ is $\Omega(n)$, then $M$ is HSP-hard.
\end{theorem}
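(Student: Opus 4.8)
The plan is to derive the statement directly from the quantitative hardness criterion of Corollary 1 in~\citet{Ref_Dinh11McEliece}, by checking that the three hypotheses exactly control the three quantities appearing in that criterion. Recall that the reduction of Code Equivalence to the hidden subgroup problem places the code's symmetries inside a hidden subgroup of a group of the form $\GL_k(\FF_q)\times S_n$ (or a suitable quotient), whose permutation part is $\aut(M)$. The distinguishability of coset states under strong Fourier sampling is then governed by: (i) the number of candidate invertible row operations, of size at most $|\GL_k(\FF_q)|<q^{k^2}$; (ii) the order $|\aut(M)|$, which bounds how many group elements contribute to any single coset state; and (iii) the minimal degree of $\aut(M)$, which, through the normalized character estimates for $S_n$, controls how quickly each non-identity symmetry's contribution to the measurement distribution decays.

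First I would restate Corollary 1 as a bound asserting that measurement of the coset state reveals negligible information about the permutation whenever the product
\[
q^{k^2}\cdot|\aut(M)|\cdot R(d)
\]
is subexponentially small, where $d$ is the minimal degree of $\aut(M)$ and $R(d)$ is the character-ratio bound supplied by the analysis in~\citet{Ref_Dinh11McEliece}. The essential input is that large minimal degree forces every non-identity symmetry to move $\Omega(n)$ coordinates, so that the normalized characters $|\chi_\lambda(\pi)|/d_\lambda$ over the irreducibles $\lambda$ are uniformly tiny, yielding a bound of the shape $R(d)\le n^{-cd}$ for an absolute constant $c>0$.

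Second I would substitute the hypotheses. With minimal degree $d=\Omega(n)$ we get $R(d)\le n^{-\Omega(n)}=\e^{-\Omega(n\ln n)}$; the assumption $q^{k^2}\le n^{0.2\,n}=\e^{0.2\,n\ln n}$ bounds the first factor, and $|\aut(M)|\le\e^{o(n)}$ bounds the second (any fixed power of it is still $\e^{o(n)}$). Multiplying, the product is at most
\[
\e^{0.2\,n\ln n + o(n) - \Omega(n\ln n)},
\]
so as long as the constant hidden in the minimal-degree decay exceeds $0.2$, the exponent is $-\Omega(n\ln n)$ and the expression is negligible. A useful consistency check is that both the hypothesis and the target involve the $n^{\Theta(n)}$ scale: this is only possible because $R(d)$ decays like $n^{-\Omega(n)}$ rather than merely $\e^{-\Omega(n)}$, confirming that the coefficient $0.2$ is calibrated against the constant $c$ in the character estimate.

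Because the analytic machinery---the Fourier-sampling distinguishability bound and the character estimates for $S_n$---is already carried out in~\citet{Ref_Dinh11McEliece}, the only remaining work is this parameter matching, and there is no serious obstacle: the theorem is genuinely immediate once Corollary 1 is in hand. The one point demanding care is confirming that the constant governing the character-ratio decay dominates the exponent $0.2\,n\ln n$ contributed by $q^{k^2}$, so that the $q^{k^2}$ and minimal-degree factors cancel to leave a negligible bound, with the subexponential factor $\e^{o(n)}$ absorbed harmlessly; this is exactly the balancing encoded in the stated numerical constants.
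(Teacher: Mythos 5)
Your bottom line coincides with the paper's: the paper's entire ``proof'' of Theorem~\ref{Thm:Quantum-McEliece} is the sentence preceding it---the statement is quoted directly from Corollary 1 of \citet{Ref_Dinh11McEliece}, whose hypotheses are exactly the three conditions listed here and whose conclusion is exactly what the definition of ``HSP-hard'' packages. So your derivation-by-citation is the same approach and reaches the right conclusion; no parameter matching is actually required.

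The caveat is that the internals you ascribe to Corollary 1 are not the real ones, and the ``one point demanding care'' you identify would fail if it were genuinely load-bearing. Normalized characters of $S_n$ are \emph{not} uniformly tiny over the irreducibles for permutations of support $\Omega(n)$: the trivial and sign representations have $|\chi_\lambda(\pi)|/d_\lambda = 1$, and the standard $(n-1)$-dimensional representation has ratio roughly $\mathrm{fix}(\pi)/n$, a constant. Roichman-type character estimates give decay only of the form $\e^{-\Omega(n)}$, and only for representations whose Young diagrams have $\lambda_1$ and $\lambda_1'$ bounded away from $n$; no bound of the shape $R(d)\le n^{-cd}$ holds uniformly, so a single product criterion $q^{k^2}\cdot|\aut(M)|\cdot R(d)$ could not beat $q^{k^2}\le n^{0.2\,n}$ by character ratios alone. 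In \citet{Ref_Dinh11McEliece} the $n^{\Theta(n)}$ scale against which the constant $0.2$ is calibrated comes from a different source: dimension counting, i.e., the fact that every irreducible of $S_n$ with $\lambda_1,\lambda_1'\le(1-\delta)n$ has dimension $n^{\Omega(n)}$ (typical dimensions are $n^{(1/2-o(1))n}$). The low-dimensional and high-dimensional representations are then handled by two separate error terms---character ratios for the former, dimension bounds (where $q^{k^2}$ enters) for the latter---rather than by one three-factor product. Your proposal still stands as a proof of the theorem, but only because Corollary 1's hypotheses coincide verbatim with the theorem's, which makes the balancing step you present as the substance of the argument unnecessary; had any genuine rebalancing of constants been needed, the criterion you posited would not have been available to carry it out.
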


We apply Theorem~\ref{Thm:Quantum-McEliece} to binary Reed-Muller codes as follows.  Let $m$ and  $r$ be positive integers with $r<m$, and let $n=2^m$. Fix an ordered list $(\alpha_1,\ldots, \alpha_n)$ of all $2^m$ binary vectors of length $m$, i.e., $\FF_2^m=\set{\alpha_1,\ldots, \alpha_n}$. The $r^{\rm th}$-order binary Reed-Muller code of length $n$, denoted $\RM(r, m)$, consists of  codewords of the form $(f(\alpha_1),\ldots, f(\alpha_n))$, where $f \in \FF_2[X_1,\ldots, X_m]$ ranges over all binary polynomials on $m$ variables of degree at most $r$. The code $\RM(r,m)$ has dimension equal to the number of monomials of degree at most $r$, 
\[
k=\sum_{j=0}^r {m\choose j}\,.
\]
To apply Theorem~\ref{Thm:Quantum-McEliece}, we first need to choose $r$ such that $k^2 \leq 0.2 \,m 2^m$.  If $r < 0.1\,m$ then $k < r {m \choose 0.1\,m} < r 2^{0.47\,m}$, and $k^2 \le 0.2 \,m 2^m$ for sufficiently large $m$.  

Next, we examine the automorphism group of the Reed-Muller codes. Let $\GL_m(\FF_2)$ denote the set of invertible $m \times m$ matrices over $\FF_2$.  It is known~\citep{Ref_Sidelnikov94public,Ref_Macwilliams78theory} that $\aut(\RM(r,m))$ coincides with the general affine group of the space $\FF_2^m$. In other words, $\aut(\RM(r,m))$ consists of all affine permutations of the form $\sigma_{A,\beta}(x) = Ax+\beta$ where $A \in \GL_m(\FF_2)$ and $\beta \in \FF_2^m$.  Hence the size of $\aut(\RM(r,m))$ is
\[
|\aut(\RM(r,m))| = |\GL_m(\FF_2)|\cdot |\FF_2^m| \leq 2^{m^2+m} =2^{O(\log^2 n)} \leq \e^{o(n)}.
\]
Finally, we compute the minimal degree of $\aut(\RM(r,m))$ as follows.
\begin{proposition}
The minimal degree of $\aut(\RM(r,m))$ is exactly $2^{m-1}=n/2$.
\end{proposition}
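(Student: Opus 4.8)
The plan is to recast the minimal degree as an extremal problem about fixed points of affine maps. Since $\aut(\RM(r,m))$ is exactly the group of affine permutations $\sigma_{A,\beta}(x)=Ax+\beta$ with $A\in\GL_m(\FF_2)$ and $\beta\in\FF_2^m$, and the support of a permutation is $n$ minus its number of fixed points, I would begin by noting that a point $x$ is fixed by $\sigma_{A,\beta}$ precisely when $(A-I)x=\beta$ (we are in characteristic two, so $-\beta=\beta$). Thus minimizing the support over non-identity elements is the same as maximizing the number of solutions of this affine system over all $(A,\beta)\neq(I,0)$.

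First I would dispose of the translations: if $A=I$ and $\beta\neq 0$, the system reads $0=\beta$ and is inconsistent, so $\sigma_{A,\beta}$ has no fixed points and support $n$. Hence we may assume $A\neq I$, so that $A-I\neq 0$ and $\rank(A-I)\geq 1$. For the lower bound I would use that the solution set of $(A-I)x=\beta$ is either empty or a coset of $\ker(A-I)$, hence has either $0$ or $2^{m-\rank(A-I)}$ elements; since $\rank(A-I)\geq 1$, there are at most $2^{m-1}$ fixed points, giving support at least $2^m-2^{m-1}=2^{m-1}=n/2$. For achievability I would exhibit a transvection: let $A$ fix $e_1,\dots,e_{m-1}$ and send $e_m\mapsto e_m+e_1$, so $A\in\GL_m(\FF_2)$, $\rank(A-I)=1$, and the image of $A-I$ is the line spanned by $e_1$. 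Choosing $\beta=e_1$ makes the system consistent, its solution set $\{x:x_m=1\}$ has exactly $2^{m-1}$ points, and $\sigma_{A,\beta}$ has support exactly $n/2$. Combining the two bounds yields the stated value.

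I do not expect a serious obstacle here, as the whole argument reduces to a rank computation; the only point requiring genuine care is the achievability step, where one must arrange \emph{two} conditions at once---that $\rank(A-I)=1$ (to make $2^{m-1}$ fixed points possible) and that the system be consistent, i.e.\ $\beta$ lies in the image of $A-I$ (so that the maximum is actually attained rather than the count collapsing to $0$). The transvection together with $\beta=e_1$ is engineered precisely to satisfy both, and verifying that this $A$ is invertible and that $\beta\in\mathrm{image}(A-I)$ is routine once the reduction to counting fixed points is in place.
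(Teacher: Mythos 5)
Your proof is correct, and your lower-bound argument takes a genuinely different (and arguably more streamlined) route than the paper's. You identify the fixed-point set of $\sigma_{A,\beta}$ as the solution set of the linear system $(A-I)x=\beta$, hence either empty or a coset of $\ker(A-I)$ of size exactly $2^{m-\mathrm{rank}(A-I)}$; since $A\neq I$ forces $\mathrm{rank}(A-I)\geq 1$, no non-identity affine map fixes more than $2^{m-1}$ points. The paper argues instead via spanning sets: it shows that if the fixed set $S$ of $\sigma_{A,\beta}$ spans $\FF_2^m$ then $\sigma_{A,\beta}$ is the identity (by translating $S$ by one of its points $x_0$ and observing that $Ay=y$ on $S-x_0$), and separately that any subset of $\FF_2^m$ with more than $2^{m-1}$ elements spans. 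Your route pins down the exact structure of the fixed set and yields the precise count $0$ or $2^{m-\mathrm{rank}(A-I)}$, while the paper's argument avoids setting up the linear system at the cost of an auxiliary lemma about spanning sets; both handle pure translations ($A=I$, $\beta\neq 0$) as a separate trivial case. The upper bounds are essentially identical: both use the transvection fixing $e_1,\dots,e_{m-1}$ and sending $e_m\mapsto e_m+e_1$ (the paper with $\beta=0$, you with $\beta=e_1$; both give support exactly $2^{m-1}$). One small remark: your concern about having to arrange \emph{two} conditions at once in the achievability step is overstated---taking $\beta=0$, as the paper does, makes consistency automatic since $0$ always lies in the image of $A-I$, and the fixed set is then simply $\ker(A-I)$.
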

\begin{proof}
The minimal degree is at most $2^{m-1}$, since there is an affine transformation with support $2^{m-1}$.  For example, let $A$ be the $m \times m$ binary matrix with $1$s on the diagonal and the $(1,m)$-entry and $0$s elsewhere.  Then $\sigma_{A,0}$ fixes the subspace spanned by the first $m-1$ standard basis vectors.  Its support is the complement of this subspace, which has size $2^{m-1}$.

Conversely, if $\sigma_{A,\beta}$ fixes a set $S$ that spans $\FF_2^m$, then $\sigma_{A,\beta}$ must be the identity.  To see this, let $x_0 \in S$ and consider the translated set $S'=S-x_0$.  Then $Ay=y$ for any $y \in S'$, since 
\[
y+x_0 = \sigma_{A,\beta}(y+x_0) = Ay + \sigma_{A,\beta}(x_0) = Ay + x_0 \, . 
\]
If $S$ spans $\FF_2^m$ then so does $S'$, in which case $A=\one$.  Then $\beta=0$, since otherwise $\sigma_{\one,\beta}$ doesn't fix anything, and $\sigma_{\one,0}$ is the identity.  

Moreover, any set $S$ of size greater than $2^{m-1}$ spans $\FF_2^m$.  To see this, let $B$ be a maximal subset of $S$ consisting of linearly independent vectors.  Since $B$ spans $S$, we have $|S| \le 2^{|B|}$.  Thus if $|S| > 2^{m-1}$ we have $|B| = m$, so $B$ and therefore $S$ span $\FF_2^m$.  Thus no nonidentity affine transformation can fix more than $2^{m-1}$ points, and the minimal degree is at least $2^{m-1}$.
\end{proof}

We have proved the following:
\begin{theorem}
\label{thm:reed-muller}
Reed-Muller codes $\RM(r,m)$ with $r \le 0.1\,m$ and $m$ sufficiently large are HSP-hard. 
\end{theorem}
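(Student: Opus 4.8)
The plan is to verify that $\RM(r,m)$ with $r\le 0.1\,m$ satisfies all three hypotheses of Theorem~\ref{Thm:Quantum-McEliece} with $q=2$ and $n=2^m$, and then to invoke that theorem directly. There are exactly three quantities to bring under control: the dimension $k$ (for the counting condition $q^{k^2}\le n^{0.2\,n}$), the order of $\aut(\RM(r,m))$, and its minimal degree. Two of these are already established verbatim in the discussion above, so the proof is largely a matter of assembling them.

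First I would handle the counting condition. Since $q=2$ and $n=2^m$, the inequality $q^{k^2}\le n^{0.2\,n}$ is equivalent to $k^2\le 0.2\,m\,2^m$. The dimension is $k=\sum_{j=0}^r\binom{m}{j}$, and for $r\le 0.1\,m$ this partial sum is dominated by its largest term, so $k\le r{m \choose 0.1\,m}$. Bounding the binomial coefficient by the entropy estimate ${m \choose 0.1\,m}\le 2^{H(0.1)\,m}$ with $H(0.1)<0.47$ gives $k<r\,2^{0.47\,m}$, whence $k^2=O\!\left(m^2\,2^{0.94\,m}\right)$. Because the exponent $0.94$ is strictly below $1$, this is $o(2^m)$ and in particular lies below $0.2\,m\,2^m$ once $m$ is large. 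This is the single genuinely quantitative step, and it is precisely where the rate restriction $r\le 0.1\,m$ is used: the exponent $2H(0.1)<1$ must stay below $1$ after squaring so that $k^2$ is swamped by $2^m$.

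Next I would dispatch the two group-theoretic hypotheses. The automorphism group of $\RM(r,m)$ is the general affine group of $\FF_2^m$, so $|\aut(\RM(r,m))|=|\GL_m(\FF_2)|\cdot 2^m\le 2^{m^2+m}=2^{O(\log^2 n)}$, which is certainly $\e^{o(n)}$. The minimal degree is supplied directly by the Proposition proved above: it equals $2^{m-1}=n/2=\Omega(n)$. With all three hypotheses in hand, Theorem~\ref{Thm:Quantum-McEliece} immediately yields that $\RM(r,m)$ is HSP-hard.

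I expect the main obstacle, such as it is, to be the dimension estimate, since the automorphism order and minimal degree are established above while the counting condition is the only place the hypothesis $r\le 0.1\,m$ actually enters. The one subtlety is that one must bound the full partial sum $\sum_{j\le r}\binom{m}{j}$ rather than merely its top term, and confirm that the entropy exponent $H(0.1)$ leaves enough slack after squaring; everything else is bookkeeping that reduces to the statements already derived.
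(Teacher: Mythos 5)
Your proposal is correct and follows essentially the same route as the paper: verifying the dimension bound $k^2 \le 0.2\,m\,2^m$ via the estimate $k < r\binom{m}{0.1\,m} < r\,2^{0.47\,m}$, citing the affine group structure for $|\aut(\RM(r,m))| \le 2^{m^2+m} = \e^{o(n)}$, and invoking the Proposition for minimal degree $n/2$, then applying Theorem~\ref{Thm:Quantum-McEliece}. Your explicit appeal to the entropy bound $\binom{m}{0.1\,m}\le 2^{H(0.1)\,m}$ merely fills in a detail the paper leaves implicit.
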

\noindent
In the original proposal of~\citet{Ref_Sidelnikov94public}, $r$ is taken to be a small constant, where the Reed-Muller codes have low rate.  It is worth noting that the attack of~\citet{Ref_Minder07Cryptanalysis} becomes infeasible in the high-rate case where $r$ is large, due to the difficulty of finding minimum-weight codewords, while Theorem~\ref{thm:reed-muller} continues to apply.  However, as those authors point out, taking large $r$ degrades the performance of Reed-Muller codes, and presumably opens the Sidelnikov system to other classical attacks.  

\paragraph{Acknowledgments.} We thank Kirill Morozov, Nicolas Sendrier, and Daniel Bernstein for teaching us about the support splitting algorithm. We are grateful to Kirill Morozov and Nicolas Sendrier for discussions about the security of the Sidelnikov cryptosystem and methods for oblivious transfer with McEliece-type cryptosystems~\citep{DBLP:conf/mmics/KobaraMO08}. Finally, we thank the organizers of the Schloss Dagstuhl Workshop on Quantum Cryptanalysis.

\end{document}